\begin{document}

\newcommand{\ones}{\mathbf{1}}
\newcommand{\infnorm}[1]{\lVert{#1}\rVert_{\infty}}
\newcommand{\inv}{^{-1}}
\newcommand{\binv}{M}
\newcommand{\sat}[1]{\text{sat}\left({#1}\right)}
\newcommand{\dz}[1]{\text{dz}\left({#1}\right)}
\newcommand{\cu}{u}

\newcommand*{\revision}{\textcolor{black}}

\newtheorem{problem}{Problem}
\newtheorem{theorem}{Theorem}
\newtheorem{lemma}{Lemma}
\newtheorem{remark}{Remark}
\newtheorem{definition}{Definition}
\newtheorem{assumption}{Assumption}
\newtheorem{conjecture}{Conjecture}

\title{Anti-windup Coordination Strategy Around a Fair Equilibrium in Resource Sharing Networks}
\date{}

\author[1]{Felix Agner\thanks{Corresponding author: felix.agner@control.lth.se}}
\author[2]{Pauline Kergus}
\author[1]{Anders Rantzer}
\author[3]{Sophie Tarbouriech}
\author[3,4]{Luca Zaccarian}

\affil[1]{Department of Automatic Control, Lund University, Sweden}
\affil[2]{LAPLACE, Université de Toulouse, CNRS, INPT, UPS, Toulouse, France}
\affil[3]{LAAS-CNRS, Université de Toulouse, CNRS, Toulouse, France}
\affil[4]{Department of Industrial Engineering, University of Trento, Trento, Italy}


\maketitle

\begin{abstract}
We coordinate interconnected agents where the control input of each agent is limited by the control input of others. In that sense, the systems have to share a limited resource over a network. Such problems can arise in different areas and it is here motivated by a district heating example. When the shared resource is insufficient for the combined need of all systems, the resource will have to be shared in an optimal fashion. In this scenario, we want the systems to automatically converge to an optimal equilibrium. The contribution of this paper is the proposal of a control architecture where each separate system is controlled by a local PI controller. The controllers are then coordinated through a global rank-one anti-windup signal. It is shown that the equilibrium of the proposed closed-loop system minimizes the infinity-norm of stationary state deviations. \revision{A proof of linear-domain passivity is given, and a numerical example highlights the benefits of the proposed method with respect to the state-of-the-art.}

\end{abstract}


\section{Introduction}
In this paper we consider the problem of \revision{asymptotically} coordinating a large number of agents that share a central, limited resource towards an optimal equilibrium. Such problems arise in many applications, e.g. optimal power flow\cite{opf_pursuit_DallAnese}, the TCP protocol \cite{internet_congestion_lowetal, fairness_internet_congestion_kelly} and buffer networks \cite{BAUSO_asymptotical_optimality, BLANCHIN_network_decentralized_optimiziation, fair_network_flow_control}. \revision{We consider the motivating example of a district heating network where unfair situations can arise in peak load conditions; buildings close to heat sources stay warm but peripheral buildings become cold. Coordinating central buildings to reduce their heat load in these scenarios would yield a more fair heat distribution \cite{AGNER2022100067}. For a more detailed view on district heating systems and challenges in district heating control, see e.g. \cite{thebible, vandermeulen_controlling_2018}.}
\revision{
We consider a representation of such systems given by a linear system with saturating control:
\begin{equation}
    \Dot{x} = -x + B\sat{u} + w.
    \label{eq:system description}
\end{equation}
Here $x \in (x_1,\dots,x_n) \in \mathbb{R}^n$ represents deviations from reference levels for the agents, $w\in\mathbb{R}^n$ is a constant disturbance acting on the system, $u\in\mathbb{R}^n$ represents the control actions of the agents and $B\in\mathbb{R}^{n\times n}$ represents the interconnection among the agents. The saturation function $\sat{\cdot}$ represents the limited nature of the resource in the system. A more detailed description of the system will be given later. Problems of this form are addressed in \cite{BAUSO_asymptotical_optimality}, which shows that feedback control on the form $u = -B^Tx$ asymptotically minimizes the cost $x^Tx+v^Tv$ for  \eqref{eq:system description} where $v = \sat{u}$. Furthermore \cite{ fair_network_flow_control} designs a controller that asymptotically minimizes varying norms of $u$ in the non-saturated formulation of \eqref{eq:system description}. We extend the asymptotically optimal control design of these previous authors in three ways. First, we consider minimizing the cost function $\infnorm{x} = \max_i{|x_i|}$, associated with worst-case fairness. In the district heating example, this objective captures the deviation in the coldest building, which for the specific application is more important than minimizing $u$. Secondly, we approach scalability of the control strategy in another fashion. Indeed, \cite{BAUSO_asymptotical_optimality, fair_network_flow_control} approach scalability by considering systems where $B$ has a sparse structure, so that with  $u = -B^Tx$, each agent acts on a few measurements. We consider another scalable control approach of rank-one coordination as utilized in \cite{daria, carolina}. In this scenario, the signals from all the agents are combined into one scalar value, and then redistributed to the agents. The advantage of rank-one schemes is that they allow for implementations that scale well and maintain privacy among the agents, as well as scalability, even when $B$ is not sparse. We do this for a specific set of systems \eqref{eq:system description} where $B$ is an \textit{M-matrix}, a property that was not previously exploited in this context. Finally, we propose a control law where the results rely only on the structure of $B$, thus making the implementation robust to modeling errors.
}
\revision{
To solve the problem under consideration we propose a controller where each agent maintains a local proportional-integral-controller, and coordination is performed through a global rank-one anti-windup correction. Anti-windup techniques have a long-standing tradition of effective use in combination with integral controllers to improve performance\cite{anti-windup-tutorial}. However, recent results show a strong connection between anti-windup schemes and optimization \cite{aw_converges_to_projection, aw_implementation_for_optimization}, opening the possibility of considering anti-windup loops for optimal equilibrium coordination, in line with what we propose here.}

In this article we provide several contributions: We propose the aforementioned controller for driving the system to an optimal equilibrium. We show that under certain conditions on the disturbance $w$, such an equilibrium exists, is unique, and is in fact \revision{uniquely} optimal. Analytical proofs of convergence are left outside the scope of this work \revision{as we have not yet been able to demonstrate them. However, we provide sufficient conditions for stability in the linear domain $\sat{u}=u$. This leads us to formulate a stability conjecture, subject of future work. A numerical experiment is included to show the effectiveness of the proposed method.}

The paper is organized as follows. The system and problem under consideration are formally introduced in Section II, along with the proposed control scheme. The existence of an equilibrium for this system is considered in Section III, and the optimality of the system equilibrium is treated in Section IV. Section V introduces a conjecture on the convergence properties of the proposed closed loop, based on a proof of linear-domain stability under suitable conditions on the PI gains. \revision{A numerical example is shown in Section VI}. Finally conclusions and future work are covered in Section VII.

\textit{Notation}: If $A$ is a matrix then denote $A_i$ to be row $i$ of $A$ and $A_{i,j}$ be the element of $A$ at row $i$ and column $j$. Let $\ones$ be a column vector of all ones with dimensions taken from context, and thus $\ones \ones^T$ is a matrix of all 1's. Denote $\sat{\cdot}$ to be the saturation function $\sat{u} = \max \left( \min \left( u, 1 \right), -1 \right)$ and denote the dead-zone function $\dz{u} = u - \sat{u}$. With a slight abuse of notation, the dead-zone and saturation functions applied to vectors operate element-wise. Let the superscript $x^0$ denote the state $x$ in an equilibrium point, and the superscript $x^*$ to be the value of $x$ which solves an optimization problem. Let the infinite norm $\infnorm{\cdot}$ of a vector $v$ be defined as the maximum magnitude element $\max_i |v_i|$.

\section{System Description and Problem Formulation}

\subsection{System Description}
Consider system \eqref{eq:system description} where $x_i \in \mathbb{R}$ is the state of each agent $i = 1, \dots, n$ and $u_i \in \mathbb{R}$ is the controller output of each agent. \revision{ $B \in \mathbb{R}^{n \times n}$ is an \textit{M-matrix} \cite{horn_johnson_1991, tutorial_on_positive_systems}. Such matrices have non-positive off-diagonal entries, thus capturing the fact that each agent negatively impacts the others. Thus, if agent $i$ increases its control input it receives more resources and the other agents receive less resources. Denote $\binv = B\inv$. M-matrices have non-negative inverses in general and we will assume that $\binv$ is strictly positive, thus $\binv_i \ones > 0$ for all $i$. Input $w$ denotes a constant, unknown disturbance affecting the system. In practice the disturbance does not need to be constant, but sufficiently slowly varying.} Let us also introduce the index $k$ as a maximizing argument of the following expression
\begin{equation}
    k \in \mathcal{K} = \text{arg} \max_i \left| \frac{\dz{\binv_iw}}{\binv_i\ones}\right|,
    \label{eq:maximizing index k}
\end{equation}
which is, in general, nonunique and characterizes the agent that is most affected by the disturbance $w$.

\subsection{Problem Formulation}
We address here the unfair allocation of resources when the agents try to reject a constant disturbance that is too large to drive the system to the origin in view of input saturation. \revision{We consider a notion of fairness as described in \cite{fair_equilibrium} where \textit{"no individual can improve its performance without affecting at least one user adversely"} with regards to the deviations $x_i$, which we formally describe below.
}
\begin{definition} \label{def:fair equilibrium}
    \revision{
    An equilibrium pair ($x^0$, $u^0$) is \textit{fair} if there is no other equilibrium pair ($x^\dagger$, $u^\dagger$) where $\infnorm{x^\dagger} < \infnorm{x^0}$, or $\infnorm{x^\dagger} = \infnorm{x^0}$ and $|x^\dagger_i| < |x^0_i|$ for some $i$.
    }
\end{definition}
We therefore consider the following problem formulation.
\begin{problem}
    Design a feedback controller driving system \eqref{eq:system description} from any suitable initial condition to an equilibrium pair ($x^0$, $u^0$), such that $x^* = x^0$ and $\revision{v}^* = \sat{u^0}$ solves the optimization problem
    \begin{mini!} 
    {x,\revision{v}}{\infnorm{x}}
    {\label{eq:optimal problem}}{\label{eq:optimal cost}}
    \addConstraint{-x+B\revision{v}+w = 0} \label{eq:optimization equality}
    \addConstraint{-1 \leq \revision{v}\leq 1} \label{eq:optimization constrained u}
    \end{mini!}
    uniquely.
\end{problem}
\revision{As $\infnorm{x_0}$ is minimized uniquely, this equilibrium must be fair by Definition \ref{def:fair equilibrium}.
}
\subsection{Proposed Feedback Controller}
We propose an individual PI controller for each agent. Each controller has an integral state $z_i$, and strictly positive gains $p_i$ (proportional gain) and $r_i$ (integral gain). These gains can be tuned locally by the agents. We then introduce a scalar communication signal exchanged among the agents, resulting in a rank-one anti-windup correction term such that each controller adds the sum of all agents' dead-zones to their integrator input. The full closed-loop system can be written as
\begin{subequations}
\begin{align}
    \Dot{x} &= -x + B\sat{u} + w \label{eq:closed loop x}\\
    \Dot{z} &= x + \beta\ones\ones^T\dz{u} \label{eq:closed loop z}\\
    u &= -Px - Rz, \label{eq:closed loop u}
\end{align}
\label{eq:closed loop system}
\end{subequations}
where $P$ and $R$ are diagonal, positive matrices gathering the controller gains $p_i$, $r_i$, 
$\ones\ones^T\dz{u}$ is the rank-one anti-windup signal and $\beta$ is a positive, scalar anti-windup gain. One advantage of the proposed structure is that, under normal circumstances, each PI controller is completely disconnected from the other ones and acts based on local information only. If saturation occurs, the central signal activated and a fairness-oriented coupling emerges from the anti-windup term. Another advantage of the architecture \eqref{eq:closed loop system} is that when coupling occurs, the coupling signal is merely the sum of the dead-zones for each agent which can be computed efficiently. The summation hides the individual signals, so that when this central signal is redistributed to the agents, each agent does not know the dead-zone values for any of the other individual agents. As such, this global signal lends itself well to scalable and privacy-compliant implementations.

\section{Closed-Loop Equilibria}
In this section we characterize the equilibria of the proposed closed-loop system \eqref{eq:closed loop system}. From \eqref{eq:closed loop system}, any equilibrium $(x^0, z^0)$ solves the equations
\begin{subequations}
\begin{align}
    0 &= -x^0 + B\sat{u^0} + w \label{eq:equilibrium x}\\
    0 &= x^0 + \beta \ones\ones^T\dz{u^0} \label{eq:equilibrium z}\\
    u^0 &= -Px^0 - Rz^0. \label{eq:equilibrium u}
\end{align}
\label{eq:equilibrium}
\end{subequations}
It is not trivial to show whether a solution to \eqref{eq:equilibrium} exists. This section studies conditions for the existence and uniqueness of such solutions. Note that it is sufficient to study pairs ($x^0$, $\cu^0$) satisfying \eqref{eq:equilibrium x} and \eqref{eq:equilibrium z}, because the positive definiteness of $R$ implies its invertibility. Hence for any such state-control pair $(x^0,\cu^0)$ satisfying \eqref{eq:equilibrium x} and \eqref{eq:equilibrium z}, $z^0$ can be uniquely determined from \eqref{eq:equilibrium u}.

\subsection{Existence of an Equilibrium Point}
\revision{Recall that $\binv = B\inv$. We provide below a necessary and sufficient condition for \eqref{eq:closed loop system} to admit an equilibrium.}
\begin{lemma}
The closed-loop system \eqref{eq:equilibrium} admits an equilibrium point ($x^0$, $z^0$), if and only if
\begin{equation}
    \max_i \frac{\binv_iw-1}{\binv_i\ones} \leq \min_j \frac{\binv_jw+1}{\binv_j\ones}.
    \label{eq:condition on w}
\end{equation}
\end{lemma}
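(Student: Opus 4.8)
The plan is to exploit the rank-one structure of the anti-windup term so as to collapse the equilibrium conditions onto a single scalar unknown. First I would observe that \eqref{eq:equilibrium z} forces $x^0 = -\beta\left(\ones^T\dz{\cu^0}\right)\ones$, so every equilibrium state is a scalar multiple of $\ones$; write $x^0 = c\,\ones$ with $c := -\beta\,\ones^T\dz{\cu^0}\in\mathbb{R}$. Substituting this into \eqref{eq:equilibrium x} and multiplying by $\binv = B\inv$ gives $\sat{\cu^0} = c\,\binv\ones - \binv w$, i.e. componentwise $\sat{\cu^0_i} = c\,\binv_i\ones - \binv_i w$. Since the saturation output always lies in $[-1,1]$ and $\binv_i\ones>0$ for every $i$, each component yields $\frac{\binv_i w-1}{\binv_i\ones}\le c\le\frac{\binv_i w+1}{\binv_i\ones}$. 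Taking the tightest bounds over $i$ shows that any equilibrium requires $\max_i\frac{\binv_i w-1}{\binv_i\ones}\le c\le\min_j\frac{\binv_j w+1}{\binv_j\ones}$, which immediately delivers \eqref{eq:condition on w} and establishes necessity.

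For sufficiency I would set $c_{\min} := \max_i\frac{\binv_i w-1}{\binv_i\ones}$ and $c_{\max} := \min_j\frac{\binv_j w+1}{\binv_j\ones}$ and assume \eqref{eq:condition on w}, i.e. $c_{\min}\le c_{\max}$. The task is then to exhibit a scalar $c\in[c_{\min},c_{\max}]$ together with a vector $\cu^0$ satisfying both $\sat{\cu^0} = c\,\binv\ones-\binv w$ and the coupling identity $c = -\beta\,\ones^T\dz{\cu^0}$. The crucial observation concerns the freedom available in $\dz{\cu^0}$: when $c\,\binv_i\ones-\binv_i w\in(-1,1)$ strictly, agent $i$ is unsaturated and $\dz{\cu^0_i}=0$ is forced; but when $c=c_{\max}$ the minimizing agent(s) sit at the upper saturation limit, so $\cu^0_i$ may be taken arbitrarily large and $\dz{\cu^0_i}$ ranges over $[0,\infty)$, and symmetrically at $c=c_{\min}$ the maximizing agent(s) may contribute any value in $(-\infty,0]$ to $\ones^T\dz{\cu^0}$.

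The main obstacle is reconciling the sign of the required $c$ with the sign of the dead-zone sum it must equal, and I would dispatch it through a short case analysis on the position of $0$ relative to $[c_{\min},c_{\max}]$. If $0\in[c_{\min},c_{\max}]$, take $c=0$ and $\cu^0 = \sat{\cu^0} = -\binv w$ (which lies in $[-1,1]^n$ precisely by \eqref{eq:condition on w} evaluated at $c=0$), so $\dz{\cu^0}=0$ and the coupling holds trivially. If $c_{\min}>0$, take $c=c_{\min}$: the agents attaining this maximum sit at the lower saturation limit, so their dead-zones may assume any value in $(-\infty,0]$, and since $c=c_{\min}>0$ forces $\ones^T\dz{\cu^0}=-c/\beta<0$, I would assign the required negative value to one such agent, which is feasible. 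The case $c_{\max}<0$ is symmetric, loading the requisite positive dead-zone onto a minimizing agent at the upper limit. In each case a consistent pair $(x^0,\cu^0)$ is produced, and $z^0$ is then recovered uniquely from \eqref{eq:equilibrium u} by invertibility of $R$, completing the construction.
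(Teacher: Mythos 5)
Your proof is correct, and while the necessity half coincides with the paper's argument, your sufficiency half is organized along a genuinely different route. For necessity, both you and the paper exploit the rank-one structure to force a single scalar consistency condition across all rows (the paper divides \eqref{eq:equilibrium sat and dz equality} row-wise and derives a contradiction between indices $i$ and $j$; you introduce $c=-\beta\ones^T\dz{\cu^0}$ and observe that every row confines $c$ to the interval $\bigl[\frac{\binv_i w - 1}{\binv_i\ones},\frac{\binv_i w + 1}{\binv_i\ones}\bigr]$, so the intervals must intersect) --- these are the same argument in two phrasings. For sufficiency, the paper guesses the explicit candidate \eqref{eq:equilibrium candidate}, built around the worst-case index $k$ of \eqref{eq:maximizing index k}, and verifies it in three scenarios on the sign of $\dz{\binv_k w}$; you instead parametrize all candidate equilibria by the scalar $c\in[c_{\min},c_{\max}]$, split on the position of $0$ relative to this interval, and load the entire required dead-zone sum onto one agent that sits exactly at a saturation limit. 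The two case analyses correspond exactly ($c_{\min}>0 \Leftrightarrow \dz{\binv_k w}>0$, etc.), and the equilibria produced coincide, but the approaches buy different things: your scalar parametrization is more systematic and makes the structure of the whole equilibrium set transparent (existence reduces to finding an admissible $c$ whose sign can be matched by dead-zones of saturated agents, with no need to ever identify $k$), whereas the paper's seemingly unmotivated explicit formula \eqref{eq:equilibrium candidate} is precisely the object reused later in Lemma 2 (uniqueness) and Theorem 1 (optimality), so its extra bookkeeping pays off downstream. One small point worth making explicit in your write-up: in the case $c_{\min}>0$, the non-designated agents must be assigned $\cu^0_j$ equal to their prescribed saturation value $c\,\binv_j\ones-\binv_j w$ (which lies in $[-1,1]$ because $c\in[c_{\min},c_{\max}]$), so that their dead-zones vanish; you state this implicitly but it is the step that makes the single-agent loading consistent.
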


\begin{proof}
Let us begin with showing that \eqref{eq:condition on w} is necessary for the existence of an equilibrium point. \eqref{eq:equilibrium x} and \eqref{eq:equilibrium z} can be combined to
\begin{equation}
    \binv w + \sat{\cu^0} = -\beta \binv \ones \ones^T \dz{\cu^0}.
    \label{eq:equilibrium sat and dz equality}
\end{equation}
Thus
\begin{equation}
    \frac{\binv_i w + \sat{\cu^0_i}}{\binv_i \ones} = -\beta \ones^T \dz{\cu^0}, \quad \forall i= 1,\dots,n.
    \label{eq: equilibrium equality for each i}
\end{equation}
If \eqref{eq:condition on w} does not hold, then there exist $i$ and $j$ such that
\begin{equation}
    \frac{\binv_i w - 1}{\binv_i \ones} > \frac{\binv_j w + 1}{\binv_j \ones}.
    \label{eq:i > j}
\end{equation}
However, \eqref{eq: equilibrium equality for each i} implies that
\begin{equation}
     \frac{\binv_i w + \sat{\cu^0_i}}{\binv_i \ones} = \frac{\binv_j w + \sat{\cu^0_j}}{\binv_j \ones}.   
     \label{eq:i = j}
\end{equation}
As $\sat{\cu^0_i} \geq -1$ and $\sat{\cu^0_j} \leq 1$, \eqref{eq:i > j} and \eqref{eq:i = j} cannot simultaneously hold, which establishes a contradiction thus proving that there is no equilibrium. This proves that \eqref{eq:condition on w} is necessary for the existence of an equilibrium. For the sufficiency, first recall the definition of $k$, given by \eqref{eq:maximizing index k}. Then consider the candidate equilibrium $x^0$, $\cu^0$ given by
\begin{subequations}
\begin{align}
    x^0 &= \ones \frac{\dz{\binv_k w}}{\binv_k \ones} & \label{eq:candidate x} \\
    \cu^0_k &= -\sat{\binv_k w} - \frac{\dz{\binv_k w}}{\beta \binv_k \ones} &\label{eq:candidate uk} \\
    \cu^0_i &= - \binv_i w + \frac{\binv_i \ones}{\binv_k \ones}\dz{\binv_k w} , &\forall i \neq k.\label{eq:candidate ui} 
\end{align}
\label{eq:equilibrium candidate}
\end{subequations}
We show below that when \eqref{eq:condition on w} holds, the candidate equilibrium \eqref{eq:equilibrium candidate} solves \eqref{eq:equilibrium}. Consider 3 scenarios. (i): $\dz{\binv_k w} = 0$, (ii): $\dz{\binv_k w} > 0$ and (iii): $\dz{\binv_k w} < 0$. In scenario (i), $x^0 = 0$, and $\cu^0 = - \binv w$. As $\dz{\binv_k w} = 0$ in this scenario, $\dz{\binv_i w} = 0$ for all $i$. Otherwise \eqref{eq:maximizing index k} would not be maximized by $k$. This implies that $\sat{\cu^0} = \cu^0 = - \binv w$ and $\dz{\cu^0} = 0$. It is thus easy to verify that \eqref{eq:equilibrium} holds. In scenario (ii), note that the left side of \eqref{eq:condition on w} is maximized by index $k$ and can be reformulated as
\begin{equation}
    \frac{\dz{\binv_kw}}{\binv_k\ones} \leq \frac{\binv_i w+1}{\binv_i\ones} \quad \forall i=1,\dots,n. \label{eq:bound on ui}
\end{equation}
Returning to the candidate equilibrium and \eqref{eq:candidate ui} for $i\neq k$, 
\begin{equation}
        \cu^0_i = -\binv_i w + \frac{\binv_i \ones}{\binv_k \ones}\dz{\binv_k w} \leq 1
\end{equation}
where the inequality is derived from \eqref{eq:bound on ui}. Thus $\cu^0_i \leq 1$. In addition,
\begin{equation}
\begin{split}
        \cu^0_i &= -\binv_i w + \frac{\binv_i \ones}{\binv_k \ones}\dz{\binv_k w} \\
       &=-\sat{\binv_i w } - \dz{\binv_i w }+ \frac{\binv_i \ones}{\binv_k \ones}\dz{\binv_k w} \\
       &=-\sat{\binv_i w } + \binv_i \ones \left(\frac{\dz{\binv_k w}}{\binv_k \ones} - \frac{\dz{\binv_i w}}{\binv_i \ones} \right) \\
       & \geq -1,
\end{split}
\end{equation}
where the last inequality holds because $k$ maximizes \eqref{eq:maximizing index k}. This means that $-1 \leq \cu^0_i \leq 1$ for all $i \neq k$. Thus 
\begin{equation}
    \sat{\cu^0_i} = \cu^0_i = - \binv_i w + \frac{\binv_i \ones}{\binv_k \ones}\dz{\binv_k w}, \quad \forall i \neq k,\label{eq:sat ui0}
\end{equation}
and
\begin{equation}
    \dz{\cu_i} = 0, \quad \forall i \neq k.
    \label{eq:dz ui0}
\end{equation}
For index $k$, \eqref{eq:candidate uk} provides
\begin{equation}
    \sat{\cu^0_k} = -\sat{\binv_k w} = - \binv_k w + \dz{\binv_k w}
    \label{eq:sat uk0}
\end{equation}
and
\begin{equation}
    \dz{\cu^0_k} = - \frac{\dz{\binv_k w}}{\beta \binv_k \ones}.
    \label{eq:dz uk0}
\end{equation}
Combining \eqref{eq:sat ui0}, \eqref{eq:dz ui0}, \eqref{eq:sat uk0} and \eqref{eq:dz uk0} yields
\begin{equation}
    \sat{\cu^0} = -\binv w + \binv \ones \frac{\dz{\binv_k w}}{\binv_k \ones}
    \label{eq:scenario ii sat}
\end{equation}
and 
\begin{equation}
    \ones^T \dz{\cu^0} = - \frac{\dz{\binv_k w}}{\beta \binv_k \ones}.
    \label{eq:scenario ii dz}
\end{equation}
which allows us to easily verify that ($x^0$, $u^0$) from \eqref{eq:equilibrium candidate} solves \eqref{eq:equilibrium} in scenario (ii). An equal argument can be made for scenario (iii), which we omit for brevity. This shows that given any scenario for $\dz{\binv_k w}$, the candidate equilibrium \eqref{eq:equilibrium candidate} is valid when \eqref{eq:condition on w} holds. Thus \eqref{eq:condition on w} is both necessary and sufficient for the existence of an equilibrium.
\end{proof}

To interpret \eqref{eq:condition on w}, note that it is satisfied when all entries $w_i$ are similar to each other. For instance, $w = s\ones$ for any scalar $s$ trivially satisfies the condition. This makes it a sensible assumption when the disturbance $w$ affects all agents in a similar way. This is for instance the case in the district heating example, where the outdoor temperature is likely to be quite similar for all the buildings located in a specific area. To simplify our follow-up definitions, we will assume that \eqref{eq:condition on w} holds with a strict inequality, as formulated below.
\begin{assumption} \label{ass:w bounded}
The disturbance $w$ satisfies \eqref{eq:condition on w} strictly, namely
\begin{equation}
    \max_i \frac{\binv_iw-1}{\binv_i\ones} < \min_j \frac{\binv_jw+1}{\binv_j\ones}
    \label{eq:strict condition on w}.
\end{equation}
\end{assumption}
We assume the strict inequality to enforce uniqueness of the equilibrium, which is studied in the next section.

\subsection{Uniqueness of the Equilibrium}
Lemma 1 shows that under Assumption \ref{ass:w bounded}, there is an equilibrium for the closed-loop system. We study here conditions for this equilibrium to be unique. To enforce the uniqueness of this equilibrium, we assume the following.
\begin{assumption} \label{ass:k unique}
Either $\dz{\binv w} = 0$, or the maximizing argument $k$ given by \eqref{eq:maximizing index k} is unique.
\end{assumption}
\revision{If $k$ is non-unique, an arbitrarily small perturbation of $B$ or $w$ would make it so. In practical applications, $w$ is expected to vary slowly over time. This makes it unlikely that $k$ would be non-unique for an extended period of time, but may also cause $k$ to shift between agents. The analysis of such scenarios requires to study the transient behavior of the system, which is outside the scope of the paper, but will be the subject of future work.}

\begin{lemma}
    If Assumptions \ref{ass:w bounded} and \ref{ass:k unique} hold, then \eqref{eq:equilibrium candidate} is the unique equilibrium of the closed-loop system \eqref{eq:closed loop system}.
\end{lemma}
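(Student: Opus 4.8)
The plan is to show that every equilibrium must coincide with the candidate \eqref{eq:equilibrium candidate}. Let $(x^0,u^0)$ be an arbitrary equilibrium. First I would extract the structural constraints already produced in the proof of Lemma 1: by \eqref{eq:equilibrium z} the state is forced to be a multiple of $\ones$, namely $x^0=c\ones$ with the scalar $c:=-\beta\ones^T\dz{u^0}$, while \eqref{eq: equilibrium equality for each i} gives $\sat{u^0_i}=c\,\binv_i\ones-\binv_i w$ for every $i$, together with the scalar balance $c=-\beta\ones^T\dz{u^0}$. Consequently $x^0$ is determined by the single scalar $c$, and the whole question reduces to (a) showing $c$ is unique and (b) showing $u^0$ is then unique.

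For part (a), since $\sat{u^0_i}\in[-1,1]$ the relation $\sat{u^0_i}=c\,\binv_i\ones-\binv_i w$ confines $c$ to $[L,U]$ with $L:=\max_i\frac{\binv_i w-1}{\binv_i\ones}$ and $U:=\min_j\frac{\binv_j w+1}{\binv_j\ones}$, where Assumption~\ref{ass:w bounded} guarantees $L<U$. The key observation is that a component can have a nonzero dead-zone only when its saturation is active, and upper (resp. lower) saturation of component $i$ forces $c$ to equal the upper endpoint $\frac{\binv_i w+1}{\binv_i\ones}$ (resp. the lower endpoint $\frac{\binv_i w-1}{\binv_i\ones}$); combined with $c\le U$ and $c\ge L$ this can only occur when $c=U$ (resp. $c=L$). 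Since $L<U$, upper and lower saturation cannot coexist, so all nonzero dead-zones share a single sign, and the balance $c=-\beta\ones^T\dz{u^0}$ then forces the sign of $c$ to be opposite to that shared sign. A short case analysis on the signs of $L$ and $U$ (interior value $c=0$ when $L<0<U$, otherwise $c$ at the appropriate endpoint) reconciles these two facts and pins $c$ down to the single value $c^\star=\frac{\dz{\binv_k w}}{\binv_k\ones}$ of \eqref{eq:candidate x}, establishing $x^0=c^\star\ones$ uniquely.

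For part (b), I would fix $c=c^\star$ and identify the saturated components. Using the chain $\frac{\binv_i w-1}{\binv_i\ones}\le\frac{\dz{\binv_i w}}{\binv_i\ones}\le c^\star$ (valid because $\sat{\binv_i w}\le 1$ and $k$ attains the relevant extremum of $\frac{\dz{\binv_i w}}{\binv_i\ones}$) and its mirror image, saturation turns out to be active exactly at the indices attaining that extremum; Assumption~\ref{ass:k unique} makes this the unique index $k$ (or, when $\dz{\binv w}=0$, there is no active saturation at all). At every non-saturated $i$ one has $\dz{u^0_i}=0$ and hence $u^0_i=\sat{u^0_i}=c^\star\binv_i\ones-\binv_i w$, recovering \eqref{eq:candidate ui}; at the single saturated index $k$ the scalar balance collapses to $\ones^T\dz{u^0}=\dz{u^0_k}=-c^\star/\beta$, which determines $\dz{u^0_k}$ and therefore $u^0_k$, recovering \eqref{eq:candidate uk}. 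When $\dz{\binv w}=0$ the balance gives $c^\star=0$ with all dead-zones vanishing, so $u^0=-\binv w$ is again unique.

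The step I expect to be the main obstacle is part (b)'s reliance on Assumption~\ref{ass:k unique}: it is precisely the uniqueness of the maximizing index that turns the lone scalar equation $\ones^T\dz{u^0}=-c^\star/\beta$ into a determined condition for the single free dead-zone variable. Were several indices simultaneously saturated, this one equation would underdetermine the dead-zone distribution and uniqueness would fail, so the delicate part is verifying, through the inequality chain above, that the extremum defining $k$ is attained at exactly one index, and handling the boundary configurations (such as $L=0$ or $U=0$) where an endpoint coincides with $c=0$.
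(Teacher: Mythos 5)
Your proposal is correct and follows essentially the same route as the paper's proof: both reduce any equilibrium to a single scalar (your $c$ is the paper's $-t = -\beta\ones^T\dz{\cu^0}$), use Assumption~\ref{ass:w bounded} to show all nonzero dead-zones share one sign so that saturation pins the scalar (your case analysis on the signs of $L,U$ is equivalent to the paper's scenarios on the sign of $\dz{\binv_k w}$), and use Assumption~\ref{ass:k unique} to conclude that only index $k$ can saturate, so the balance $\ones^T\dz{\cu^0} = -c/\beta$ determines $\dz{\cu^0_k}$ and hence $\cu^0$. One small addition: since your argument only shows that every equilibrium coincides with the candidate \eqref{eq:equilibrium candidate}, you should also invoke the sufficiency part of Lemma~1 (as the paper does) to establish that this candidate is indeed an equilibrium, completing the claim that it is \emph{the} unique one.
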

\begin{proof}
    Recall from the proof of Lemma 1 that for any equilibrium inducing input $\cu^0$, identity \eqref{eq:equilibrium sat and dz equality} must hold. Now denote
    \begin{equation}
        t = \beta  \ones^T \dz{\cu^0},
        \label{eq:definition of t}
    \end{equation}
    which allows \eqref{eq:equilibrium sat and dz equality} to be rewritten as
    \begin{equation}
        \sat{\cu^0_i} = -\binv_i w - M_i \ones t, \quad \forall i = 1,\dots,n.
        \label{eq:sat equality t}
    \end{equation}
    Note that if $t>0$, there must exist an $i \in \{ 1,\dots,n \}$ such that $\sat{\cu^0_i} = 1$. Similarly, if $t<0$, there exists an $i \in \{ 1,\dots,n \}$ such that $\sat{\cu^0_i} = -1$. Also note that \eqref{eq:strict condition on w} implies that there cannot exist $i$ and $j$ such that $\binv_i w \geq 1$ and $\binv_j w  \leq -1$. This in turn implies that either $\dz{\cu^0} \geq 0$ or $\dz{\cu^0} \leq 0$, where the inequality should be understood componentwise.
    Now, recalling that $k$ in \eqref{eq:maximizing index k} is unique by assumption, consider 3 scenarios; (i): $\dz{\binv_k w} = 0$, (ii): $\dz{\binv_k w} > 0$ and (iii): $\dz{\binv_k w} < 0$. In scenario (i), we see that $\dz{\binv_i w} = 0$ for all $i = 1,\dots,n$, as otherwise $|\dz{\binv_i w}| > 0$ for some $i$, implying that \eqref{eq:maximizing index k} would be maximized by this $i$. Thus $|M_i w| \leq 1$ for all $i$. Through \eqref{eq:sat equality t}, we prove next that this implies $t=0$. Indeed, assume by an absurd argument that $t>0$. Then \eqref{eq:sat equality t} yields $\sat{\cu^0_i} = -\binv_i w - M_i \ones t < -\binv_i w \leq 1$ for all $i=1,\dots,n$. But if $\sat{\cu^0_i} < 1$ for all $i$, then $\dz{\cu^0_i} \leq 0$ for all $i$ and thus $t$ cannot be positive. A parallel contradiction can be built for $t<0$. Thus we conclude that $t=0$. In turn, $t=0$ implies that $\dz{\cu^0}=0$, because \eqref{eq:definition of t} shows that $t$ is the sum of the entries of $\dz{\cu^0}=0$, multiplied by the positive scalar $\beta$. As the entries of $\dz{\cu^0}$ are either all positive or all negative, $t$ can only be 0 if all of the entries of $\dz{\cu^0}$ are 0. This uniquely fixes $\cu^0 = -\binv w$, which is the same as the candidate solution \eqref{eq:equilibrium candidate}. This in turn uniquely fixes $x^0$ through \eqref{eq:equilibrium z}, and uniquely fixes $z^0$ through \eqref{eq:equilibrium u}.

   In scenario (ii), \eqref{eq:sat equality t} implies $t \leq -\frac{\dz{\binv_k w}}{\binv_k \ones}$, because otherwise $\sat{\cu^0_k} < -1$. If $t = \frac{\dz{\binv_k w}}{\binv_k \ones}$ then $\sat{\cu^0_k} = -1$. For $i \neq k$,
    \begin{equation}
    \begin{split}
            \sat{\cu^0_i} &= -\binv_i w + \frac{\binv_i \ones}{\binv_k \ones}\dz{\binv_k w} \\
           &= -\sat{\binv_i w } - \dz{\binv_i w }+ \frac{\binv_i \ones}{\binv_k \ones}\dz{\binv_k w} \\
           &=-\sat{\binv_i w } + \binv_i \ones \left(\frac{\dz{\binv_k w}}{\binv_k \ones} - \frac{\dz{\binv_i w}}{\binv_i \ones} \right) \\
           &> -1.
    \end{split}
    \end{equation}
    The last inequality holds because $k$ uniquely maximizes \eqref{eq:maximizing index k} and $\binv_i \ones > 0$ due to non-negativity and invertibility of $\binv$. Thus we conclude that $t = -\frac{\dz{\binv_k w}}{\binv_k \ones}$, because otherwise $\sat{\cu^0_i} > -1$ for all $i$, contradicting the fact that $t$ is negative. For this scenario (ii), \eqref{eq:strict condition on w} can be written as
    \begin{equation}
        \frac{\dz{\binv_k w}}{\binv_k \ones} < \frac{\binv_i w + 1}{\binv_i \ones},\quad \forall i \neq k ,
    \end{equation}
    Which can be combined with $t = -\frac{\dz{\binv_k w}}{\binv_k \ones}$ to show that, for $i \neq k$,
    \begin{equation}
        \sat{\cu^0_i} = -\binv_i w + \frac{\binv_i \ones}{\binv_k \ones}\dz{\binv_k w}  < 1. \label{eq:mid-lemma-2-proof inequality}
    \end{equation}
    Inequality \eqref{eq:mid-lemma-2-proof inequality} implies $|\sat{\cu^0_i}| < 1$ for all $i\neq k$, and thus $\dz{\cu^0_i}=0$ for all $i\neq k$. This implies
    \begin{equation}
        t = \beta \ones^T \dz{\cu^0} = \beta \dz{\cu^0_k}
    \end{equation}
    and thus
    \begin{equation}
        \dz{\cu^0_k} = - \frac{\dz{\binv_k w}}{\beta \binv_k \ones}. \label{eq:lemma 2 dz k}
    \end{equation}
    Equations \eqref{eq:mid-lemma-2-proof inequality} and \eqref{eq:lemma 2 dz k} uniquely determine $\cu^0$, and, together with \eqref{eq:equilibrium candidate}, $x^0$ and $z^0$ are uniquely determined. For scenario (iii), a symmetric argument can be followed, which is omitted for brevity, thus completing the proof.
\end{proof}

\section{Optimality}
We proved in the previous section that under Assumptions \ref{ass:w bounded} and \ref{ass:k unique}, the proposed closed-loop system has a unique equilibrium, given by \eqref{eq:equilibrium candidate}. In this section, we will prove that this equilibrium is also the unique, optimal solution to \eqref{eq:optimal problem}.

\begin{theorem}
    If Assumptions \ref{ass:w bounded} and \ref{ass:k unique} hold, then $x^* = x^0$ and $\revision{v}^* = \sat{\cu^0}$ is \revision{the unique solution to} \eqref{eq:optimal problem} where $x^0$ and $\cu^0$ are given by \eqref{eq:equilibrium candidate}.
\end{theorem}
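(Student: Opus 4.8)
\emph{The plan} is to prove optimality by a direct feasibility-plus-lower-bound argument, in the spirit of weak duality: first verify that the proposed pair $(x^0,\sat{\cu^0})$ is feasible for \eqref{eq:optimal problem} and record the cost it attains, then show that no feasible point can achieve a smaller value of \eqref{eq:optimal cost}.

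For feasibility, I would observe that the equality \eqref{eq:optimization equality} evaluated at $(x^0,\sat{\cu^0})$ is precisely \eqref{eq:equilibrium x}, which was already verified for the candidate \eqref{eq:equilibrium candidate} in the proof of Lemma 1, while $-1\le\sat{\cu^0}\le 1$ holds trivially by definition of the saturation. Since $x^0=\ones\,\dz{\binv_k w}/(\binv_k\ones)$ is a scalar multiple of $\ones$, its cost is simply $\infnorm{x^0}=|\dz{\binv_k w}|/(\binv_k\ones)$.

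The core step is the matching lower bound. For any feasible $(x,u)$ I would eliminate $u$ through \eqref{eq:optimization equality}, obtaining $u=\binv x-\binv w$ and hence $\binv_i x=\binv_i w+u_i$ with $|u_i|\le 1$ for each $i$. Setting $\nu:=\infnorm{x}$, so that $-\nu\ones\le x\le\nu\ones$ componentwise, and using that every entry of the row $\binv_k$ is nonnegative, gives the two-sided bound $-\nu\,\binv_k\ones\le\binv_k x\le\nu\,\binv_k\ones$. I would then restrict attention to the single most-affected agent $k$ from \eqref{eq:maximizing index k} and split on the sign of $\dz{\binv_k w}$. In scenario (ii), $\dz{\binv_k w}>0$ forces $\binv_k w-1>0$; combining $\binv_k x\ge\binv_k w-1$ (from $u_k\ge-1$) with $\binv_k x\le\nu\,\binv_k\ones$ yields $\nu\ge(\binv_k w-1)/(\binv_k\ones)=\dz{\binv_k w}/(\binv_k\ones)=\infnorm{x^0}$. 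Scenario (iii) is symmetric, using instead $u_k\le 1$ together with $-\nu\,\binv_k\ones\le\binv_k x$. Scenario (i) is immediate, since then $x^0=0$ is a lower bound on any norm. In all three cases $\infnorm{x}\ge\infnorm{x^0}$, so the feasible candidate attains the minimum.

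I expect the delicate points to be bookkeeping rather than depth: one must invoke Assumption \ref{ass:w bounded} (as in the proof of Lemma 2) to ensure that the saturated agents share a common sign of dead-zone, so the scenario governing $k$ is consistent, and one must pair the correct one-sided input constraint ($u_k\ge-1$ versus $u_k\le 1$) with the sign of $\dz{\binv_k w}$. The appealing feature I would emphasize is that the lower bound uses only the single row $\binv_k$ and the maximality built into the definition \eqref{eq:maximizing index k}; the remaining agents enter solely through feasibility, which is already secured by Lemma 1.
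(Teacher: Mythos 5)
Your proposal is correct, and it takes a cleaner, more direct route than the paper. The paper proves optimality by contradiction: it parametrizes any competing feasible point as $x^\dagger = x^* + \xi$, $u^\dagger = u^* + \binv\xi$, argues that a strictly lower cost forces $\xi < 0$ componentwise (using that $x^*$ is a positive multiple of $\ones$), and then shows that the constraint at row $k$, which reduces to $|-1+\binv_k\xi|\le 1$, is violated because $\binv_k$ has nonnegative entries. Your argument instead eliminates $u$ via the equality constraint and derives, for \emph{any} feasible $x$, the unconditional lower bound $\infnorm{x} \ge |\dz{\binv_k w}|/(\binv_k\ones)$ from the two facts $|u_k|\le 1$ and $\binv_k x \le \infnorm{x}\,\binv_k\ones$ (resp.\ the mirrored pairing in scenario (iii)); matching this against the candidate's cost finishes the proof. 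The ingredients are the same (row $k$ of the constraint, nonnegativity of $\binv_k$, and the identity $\binv_k w - 1 = \dz{\binv_k w}$ when $\binv_k w > 1$), but your weak-duality-style formulation avoids the componentwise sign argument on $\xi$ entirely and in fact yields the stronger per-agent bound $\infnorm{x} \ge |\dz{\binv_i w}|/(\binv_i\ones)$ for every $i$, which makes transparent why $k$ is defined as the maximizer in \eqref{eq:maximizing index k}. One small correction to your anticipated ``delicate point'': the lower bound does \emph{not} require Assumption \ref{ass:w bounded} or any sign-consistency of dead-zones across agents --- it is a per-agent inequality valid for every feasible point of \eqref{eq:optimal problem}. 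The two assumptions enter only through the feasibility half, i.e.\ through Lemmas 1 and 2 guaranteeing that the candidate \eqref{eq:equilibrium candidate} is a genuine (and unique) equilibrium whose saturated input satisfies the constraints.
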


\begin{proof}
Lemma 2 proves that under Assumptions \ref{ass:w bounded} and \ref{ass:k unique}, $(x^0,\cu^0)$ is a state-input equilibrium pair. This means that $(x^*,\revision{v}^*)=(x^0, \sat{u^0})$ satisfies the constraints \eqref{eq:optimization equality} and \eqref{eq:optimization constrained u} and is therefore feasible. What remains is only to show that it is not only feasible but also \revision{uniquely} optimal. Consider for establishing a contradiction that there exists $\xi \neq 0$, such that $x^\dagger = x^*+\xi$, along with $\revision{v}^\dagger = \revision{v}^* + \binv \xi$ is also feasible and provides a lower or equal cost than $(x^*,\revision{v}^*)$. An equivalent rewriting of \eqref{eq:optimal problem} using $\xi$ is
\begin{mini!} 
{\xi}{\infnorm{\ones\frac{\dz{\binv_k w}}{\binv_k\ones} + \xi}}
{\label{eq:xi cost}}{}
\addConstraint{\infnorm{\binv (\xi - w + \ones \frac{\dz{\binv_k w}}{\binv_k\ones} )}  \leq 1.} \label{eq:xi restricted control}
\end{mini!}
First note that if $\dz{\binv_k w}=0$, then $\xi=0$ is trivially optimal as any $\xi\neq 0$ would yield a higher cost and thus not be an optimizer. Then consider the case where $\dz{\binv_k w} > 0$. For $\xi$ to provide a lower or equal cost, it must hold that $\xi_i \leq 0$ for all $i$. However, analyzing constraint \eqref{eq:xi restricted control} for index $k$ yields
\begin{equation}
    |-\binv_k w + \dz{\binv_k w} + \binv_k \xi | \leq 1.
    \label{eq:optimization constraint for k}
\end{equation}
Since we are focusing on the case $\dz{\binv_k w} > 0$, \eqref{eq:optimization constraint for k} reduces to 
\begin{equation}
    | -1 + \binv_k \xi | \leq 1.
    \label{eq:optimization constraint for k reduced}
\end{equation}
Since $\xi \leq 0$, inequality \eqref{eq:optimization constraint for k reduced} \revision{can only hold for $x_i=0$ as $\binv_k$ has strictly positive entries}. Therefore $\xi=0$ is uniquely optimal when $\dz{\binv_k w} > 0$. A parallel reasoning can be performed for the case $\dz{\binv_k w} < 0$. Thus $(x^*,\revision{v}^*)$ is the optimal solution to \eqref{eq:optimal problem}.
\end{proof}

\section{Stability Properties}
\revision{The results of Sections III and IV established that under Assumptions \ref{ass:w bounded} and \ref{ass:k unique}, the unique equilibrium of the closed-loop system \eqref{eq:closed loop system} solves the optimization problem \eqref{eq:optimal problem}. In this section we formulate the following conjecture regarding its stability properties.
}
\begin{conjecture}
\revision{
Under Assumptions \ref{ass:w bounded} and \ref{ass:k unique}, if $p_i > r_i$ for all $i=1,\dots,n$, then the proposed controller \eqref{eq:closed loop system} globally solves Problem 1.
}
\label{conj:convergence conjecture}
\end{conjecture}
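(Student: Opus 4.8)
The plan is to leverage the earlier results to reduce the conjecture to a stability question. By Lemma 2 the pair $(x^0,\cu^0)$ in \eqref{eq:equilibrium candidate} is the \emph{unique} equilibrium of \eqref{eq:closed loop system}, and by Theorem 1 it already solves \eqref{eq:optimal problem}. Hence "globally solving Problem 1" is equivalent to proving that this equilibrium is globally asymptotically stable for the closed loop \eqref{eq:closed loop system}. I would therefore pass to the error coordinates $\tilde{x}=x-x^0$, $\tilde{z}=z-z^0$, $\tilde{\cu}=\cu-\cu^0$ and pursue a Lyapunov / invariance argument.

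The first structural step is to remove the integrator state using invertibility of $R$. Writing $z=-R\inv(\cu+Px)$ and differentiating \eqref{eq:closed loop u} along \eqref{eq:closed loop x}--\eqref{eq:closed loop z} gives the equivalent $(x,\cu)$ representation
\begin{equation}
\dot{\cu} = (P-R)x - PB\sat{\cu} - \beta R\ones\ones^T\dz{\cu} - Pw,
\end{equation}
in which the gain hypothesis $p_i>r_i$ is precisely what makes the diagonal matrix $P-R$ positive definite. This is the natural entry point for the assumption of the conjecture, so I would build the entire argument so that $(P-R)\tilde{x}$ remains available as a sign-definite contribution in the Lyapunov derivative.

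Second, I would exploit that $\sat{\cdot}$ is the element-wise Euclidean projection onto $[-1,1]^n$ and $\dz{\cdot}$ is its complement, so both maps are monotone and firmly non-expansive. With $\Delta_s=\sat{\cu}-\sat{\cu^0}$ and $\Delta_d=\dz{\cu}-\dz{\cu^0}$, this yields the element-wise incremental sector inequalities $\Delta_{s,i}\tilde{\cu}_i\ge\Delta_{s,i}^2\ge 0$ and $\Delta_{d,i}\tilde{\cu}_i\ge\Delta_{d,i}^2\ge 0$, together with $\Delta_s+\Delta_d=\tilde{\cu}$ and hence $\Delta_{s,i}\Delta_{d,i}\ge 0$. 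I would then search for a Lyapunov function combining a quadratic term in $\tilde{x}$, a quadratic or Lur'e--Postnikov integral term capturing $\tilde{\cu}$, and a cross structure weighted by the $M$-matrix $B$, so that these sector inequalities and the positive definiteness of $P-R$ combine to force $\dot V\le 0$. A complementary candidate worth trying, motivated by the $\infnorm{\cdot}$ objective and the $M$-matrix structure, is a weighted max-type (non-Euclidean) contraction certificate on $\tilde{x}$, since the optimal state \eqref{eq:candidate x} is the uniform vector $\ones\,\dz{\binv_k w}/(\binv_k\ones)$.

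Third, having secured $\dot V\le 0$, I would invoke LaSalle's invariance principle: properness (radial unboundedness) of $V$ gives boundedness of trajectories, and global convergence would follow from showing that the largest invariant set contained in $\{\dot V=0\}$ is the singleton $\{(x^0,z^0)\}$. I expect this final step, rather than the construction of $V$, to be the main obstacle. The difficulty is twofold: the closed loop is piecewise-affine, its mode determined by which of the $n$ inputs are saturated high, saturated low, or unsaturated, so the analysis is inherently combinatorial; and the coupling enters only through the rank-one, non-coercive term $\beta\ones\ones^T\dz{\cu}$, which constrains the aggregate $\ones^T\dz{\cu}$ but not the individual dead-zones. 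Ruling out nontrivial invariant trajectories that persist on the boundaries between saturation modes --- and confirming that $p_i>r_i$ is exactly (and not merely sufficiently) the condition preventing them --- is where I expect the genuine work to lie, and very likely where a single smooth Lyapunov function may have to be replaced by a piecewise or hybrid certificate.
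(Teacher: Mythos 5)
You were asked to prove what the paper itself only states as a conjecture: the paper provides no proof of this statement at all. It explicitly defers analytical convergence results to future work and supports the claim solely by a randomized numerical study (1000 random systems, 100 initial conditions each, all converging to the equilibrium \eqref{eq:equilibrium candidate} within tolerance). So there is no paper proof to compare against, and the relevant question is whether your proposal closes the gap on its own. It does not. Your reduction of the conjecture to global asymptotic stability of the unique equilibrium is sound and matches the paper's logic exactly (Lemma 2 gives uniqueness, Theorem 1 gives optimality, so ``globally solves Problem 1'' is equivalent to global attractivity of $(x^0,z^0)$). Your reduced dynamics $\dot{\cu} = (P-R)x - PB\sat{\cu} - \beta R\ones\ones^T\dz{\cu} - Pw$ is algebraically correct, and the incremental sector inequalities for $\sat{\cdot}$ and $\dz{\cdot}$ are valid properties of projections. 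But the core of a stability proof --- an explicit Lyapunov function $V$, a verified inequality $\dot V \le 0$ exploiting $P-R \succ 0$, and the LaSalle invariance argument ruling out nontrivial invariant sets across the combinatorial saturation modes --- is never carried out; you yourself flag the invariance step as the place ``where the genuine work lies.'' A plan that correctly identifies the hard step but leaves it open is a research outline, not a proof, and in this case the open step is precisely the open problem.

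One further caution: you write that the argument should confirm that $p_i > r_i$ is ``exactly (and not merely sufficiently)'' the condition preventing bad invariant trajectories. The paper's own evidence cuts against this: its district heating example uses $p_i = 1 < r_i = 1.5$, violates the conjecture's hypothesis, and still converges, and the paper remarks that the conjecture ``might be conservative.'' So if you do pursue this program, you should aim only at sufficiency of $p_i > r_i$; building the argument so that the condition appears necessary would be aiming at a claim the paper's simulations already falsify.
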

\revision{
Conjecture \ref{conj:convergence conjecture}, subject to its proof, would provide strong properties for the proposed control law, granting stability and optimality for a large family of systems subject to a simple control tuning constraint. The proof however is non-trivial and requires results for saturated systems operating deeply in the saturated regime, which is why it is left outside the scope if this work. Our confidence in Conjecture \ref{conj:convergence conjecture} arises from numerous simulations of randomized systems. In addition, the specific choice of $p_i > r_i$ provides notions of stability for our problem through the following lemma. 
}
\begin{lemma} \label{lemma:linear stability}
\revision{
Assume that $p_i > r_i$ for all $i$ and $w\in \mathcal{L}_2$. Then system \eqref{eq:closed loop system} is asymptotically stable in the region of linearity where $\sat{u} = u$.
}
\end{lemma}
\begin{proof}
\revision{
Define $y=-Bu$. When $\sat{u} = u$ and thus $\dz{u}=0$, the closed loop system \eqref{eq:closed loop system} can be reformulated in the frequency domain as
\begin{subequations}
    \begin{align}
        sX &= -X - Y + W \\
        sU &= (P-R)X + PY - PW.
    \end{align}
\end{subequations}
These equations are fully diagonal, and can for each agent $i$ be combined to form
\begin{equation}
    U_i = \frac{r_i + p_i s}{s(s+1)}(Y_i - W_i) = G_i(s)(Y_i-W_i).
    \label{eq:linear transfer function}
\end{equation}
This feedback interconnection is represented in Figure \ref{fig:linear block diagram}. When $p_i > r_i$, the transfer function \eqref{eq:linear transfer function} is positive real, making it a passive component \cite{khalil}. In addition, due to $B$ being an M-matrix, we know that there exists a positive, diagonal matrix $D$ such that $-DB - B^T D \prec 0$. This means that the combined upper block of Figure \ref{fig:linear block diagram} is strictly passive. The multiplication by the positive, diagonal matrix $D\inv$ does not affect the passivity properties of $G_1(s)\dots G_n(s)$. The feedback interconnection between the strictly passive upper block and the passive lower block means that for any $w \in \mathcal{L}_2$, we have $u \in \mathcal{L}_2$\cite{feedback_systems}. This means that $\lim_{t \to \infty} w(t) = 0$, $\lim_{t \to \infty} u(t) = 0$ and thus clearly $\lim_{t \to \infty} x(t) = 0$ by \eqref{eq:closed loop x}.
}
\end{proof}

\begin{figure}
    \centering
    \vspace{8pt}
    \includegraphics[width = .6\columnwidth]{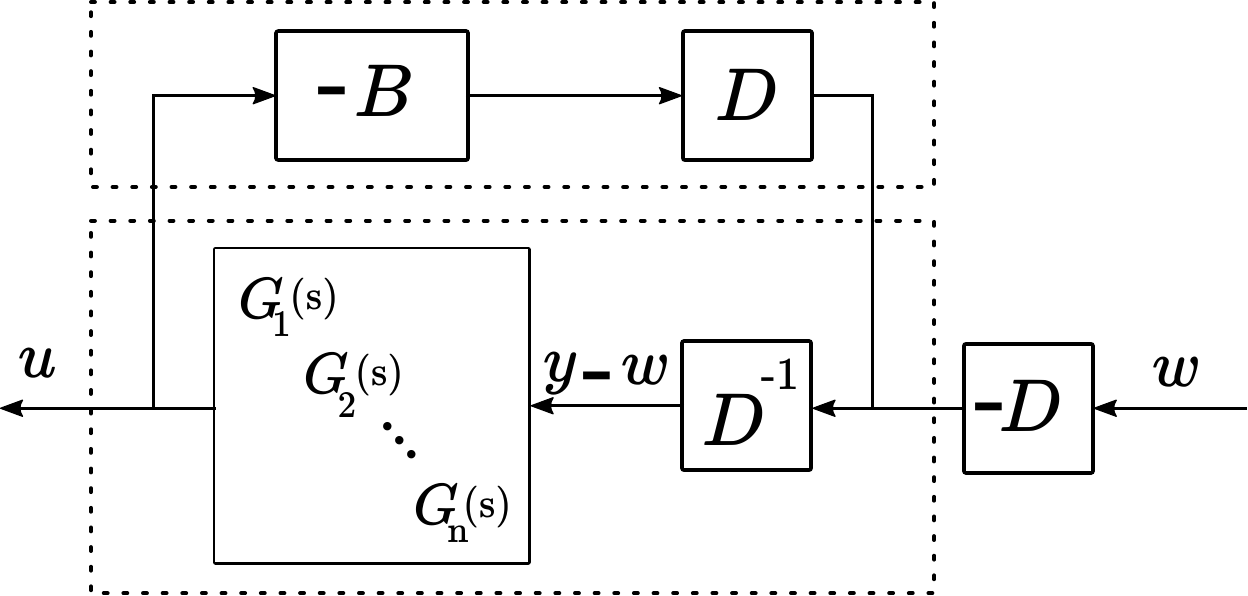}
    \caption{Block diagram, showing the interconnection used in the proof of Lemma \ref{lemma:linear stability}.}
    \label{fig:linear block diagram}
\end{figure}
\revision{
To prove or refute Conjecture \ref{conj:convergence conjecture} in future work, we believe that these passivity properties may be a useful tool. While it can be shown that the condition $P>R$ is conservative, we have also found examples of sufficiently large integral gains causing instability, thereby suggesting that our conjecture is reasonable.
}

\section{Numerical Example}

\begin{figure}[h] 
     \centering
     \includegraphics[width= 0.6\columnwidth]{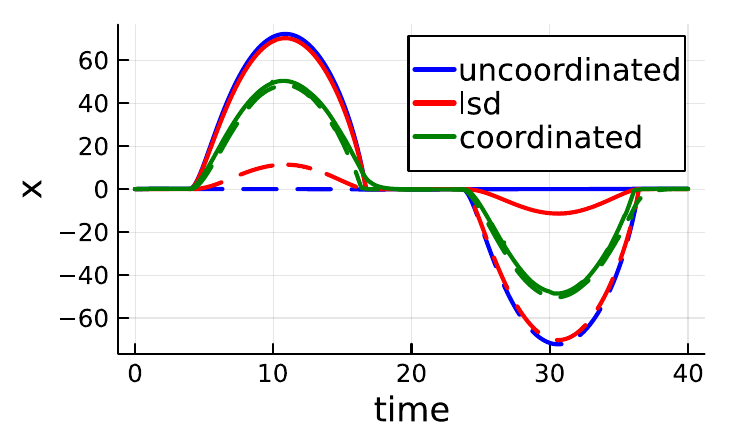}
     \caption{Envelopes of the states $x$ for each strategy. The dashed lines constitute the minimum $\min_i x_i(t)$ and the solid lines the maximum $\max_i x_i(t)$.}
     \label{fig:envelopes}
\end{figure}

\begin{figure}[h]
    \centering
    \includegraphics[width= 0.6\columnwidth]{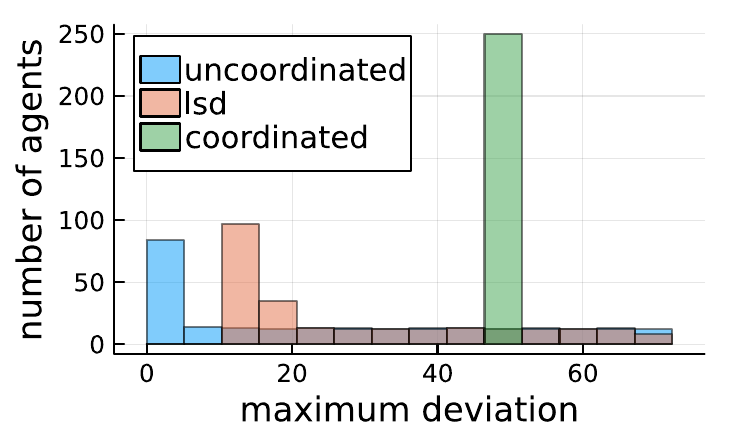}
    \caption{Histogram of maximum absolute deviations $\max_t |x_i(t)|$ experienced under each strategy. From deviation 20 to 80, the red and blue bars overlap.}
    \label{fig:histogram}
\end{figure}

\revision{To demonstrate the usefulness of the proposed controller, we investigate $n=250$ agents interconnected through the matrix $B=D(1.2n I - \ones \ones^T)$ where $D = \text{diag}(d_1,d_2, \dots , d_n)$ and $d_1, \dots , d_n$ are distributed at even intervals between 0.5 and 1.5. $w(t) = \ones \frac{n \sin{(t/2\pi)}}{2}$. We compare three strategies: First the \textit{coordinated} strategy, consisting in the controller proposed in this paper using the gains $p_i=1$, $r_i=1.5$ for all $i$ and $\beta=1$. Secondly the \textit{uncoordinated} strategy, namely the same PI-controllers as those of the coordinated case, only equipped with a local anti-windup action: $\dot{z}_i = x_i + \beta\dz{u_i}$. Finally the \textit{linear saturated decentralized} (lsd) controller $u = -B^Tx$ as proposed by \cite{BAUSO_asymptotical_optimality}. The systems are simulated using the \texttt{DifferentialEquations} toolbox \cite{rackauckas2017differentialequations} in Julia. Figure \ref{fig:envelopes} shows the envelopes of the time series over the simulation. In the coordinated case (green), all of the states $x$ are nearly completely synchronized. Under both the uncoordinated (blue) and the lsd (red) strategy, there is a large discrepancy between the maximum and minimum states. Furthermore, the lsd strategy is optimal with regards to a tradeoff between states $x$ and control action $u$, and therefore no states are driven to the origin with large disturbances $w$. Figure \ref{fig:histogram} shows histograms of the worst magnitude deviations in each strategy. We see that both the uncoordinated (blue) and lsd (red) strategies have several agents with larger deviations than any of the agents in the coordinated case. However, both the uncoordinated and lsd strategies also have many agents with lower deviations than that of the coordinated case.}

\section{Conclusion}
In this paper we have presented a controller for coordinating the control actions of agents that share a central resource. We proved that the only equilibrium of this closed-loop system is optimally fair. \revision{This optimality concerns the states $x$, an important extension of the literature which has mainly focused on properties of the control input $u$. A conjecture was proposed giving conditions for stability of this optimal equilibrium, motivated by passivity of the closed-loop system in the linear domain.}

Subject to the proof of Conjecture 1, the proposed method has many advantages. Each agent could tune the gains of a PI-controller locally while maintaining global guarantees of stability. These guarantees are only dependent on the structure of the system and not the model itself (i.e. the $B$-matrix does not have to be known, only that it has certain properties). \revision{The rank-one communication scheme ensures scalability of the implementation which does not require sparsity of $B$.}

Extensions of the work include exploiting the proven passivity property to prove stability with regards to the optimal equilibrium. Further system structures could be considered, for instance more general $A$-matrices, output feedback, or non-linear interconnections $B(u)$ which maintain similar properties to the current $B$-structure. Finally, one can consider analyzing and improving transient performance.

\section{Acknowledgment of Support}
The work of Felix Agner and Anders Rantzer is funded by the European Research Council (ERC) under the European Union's Horizon 2020 research and innovation program under grant agreement No 834142 (ScalableControl). They are members of the ELLIIT Strategic Research Area at Lund University. The work of Luca Zaccarian is supported in part by ANR via grant HANDY, number ANR-18-CE40-0010.


\bibliographystyle{ieeetr}
\bibliography{bibliography}

\end{document}